\documentclass[10pt]{article}

\usepackage{times}

%%% basic packages
\usepackage{stmaryrd}
\usepackage{enumerate}
\usepackage{algorithm}
\usepackage{algpseudocode}
\usepackage{blkarray}
\usepackage{amsmath}
\usepackage[titletoc,title]{appendix}
\usepackage{graphicx,epic,eepic,epsfig,latexsym,amssymb,verbatim,color}\usepackage{subfigure}
\usepackage{mathrsfs}
\usepackage{dsfont}
\usepackage{float}
\usepackage{cases}
\usepackage{makecell}
\usepackage{cases}
%% tikz
\usepackage{tikz}
\usepackage{qcircuit}
\usepackage{hyperref}
\usepackage{cleveref}
\hypersetup{colorlinks=true,citecolor=blue,linkcolor=blue,filecolor=blue,urlcolor=blue,breaklinks=true}

\usetikzlibrary{calc, shapes, backgrounds}
%% tikz corrodinate
\makeatletter
\def\grd@save@target#1{%
  \def\grd@target{#1}}
\def\grd@save@start#1{%
  \def\grd@start{#1}}
\tikzset{
  grid with coordinates/.style={
    to path={%
      \pgfextra{%
        \edef\grd@@target{(\tikztotarget)}%
        \tikz@scan@one@point\grd@save@target\grd@@target\relax
        \edef\grd@@start{(\tikztostart)}%
        \tikz@scan@one@point\grd@save@start\grd@@start\relax
        \draw[minor help lines,magenta] (\tikztostart) grid (\tikztotarget);
        \draw[major help lines] (\tikztostart) grid (\tikztotarget);
        \grd@start
        \pgfmathsetmacro{\grd@xa}{\the\pgf@x/1cm}
        \pgfmathsetmacro{\grd@ya}{\the\pgf@y/1cm}
        \grd@target
        \pgfmathsetmacro{\grd@xb}{\the\pgf@x/1cm}
        \pgfmathsetmacro{\grd@yb}{\the\pgf@y/1cm}
        \pgfmathsetmacro{\grd@xc}{\grd@xa + \pgfkeysvalueof{/tikz/grid with coordinates/major step}}
        \pgfmathsetmacro{\grd@yc}{\grd@ya + \pgfkeysvalueof{/tikz/grid with coordinates/major step}}
        \foreach \x in {\grd@xa,\grd@xc,...,\grd@xb}
        \node[anchor=north] at (\x,\grd@ya) {\pgfmathprintnumber{\x}};
        \foreach \y in {\grd@ya,\grd@yc,...,\grd@yb}
        \node[anchor=east] at (\grd@xa,\y) {\pgfmathprintnumber{\y}};
      }
    }
  },
  minor help lines/.style={
    help lines,
    step=\pgfkeysvalueof{/tikz/grid with coordinates/minor step}
  },
  major help lines/.style={
    help lines,
    line width=\pgfkeysvalueof{/tikz/grid with coordinates/major line width},
    step=\pgfkeysvalueof{/tikz/grid with coordinates/major step}
  },
  grid with coordinates/.cd,
  minor step/.initial=.2,
  major step/.initial=1,
  major line width/.initial=2pt,
}
\makeatother

%% theorem enviroments
\usepackage[marginal]{footmisc}
\usepackage{url}
\usepackage{theorem}
\newtheorem{definition}{Definition}

\newtheorem{lemma}[definition]{Lemma}

\newtheorem{theorem}[definition]{Theorem}

\def\squareforqed{\hbox{\rlap{$\sqcap$}$\sqcup$}}
\def\qed{\ifmmode\squareforqed\else{\unskip\nobreak\hfil
\penalty50\hskip1em\null\nobreak\hfil\squareforqed
\parfillskip=0pt\finalhyphendemerits=0\endgraf}\fi}
\def\endenv{\ifmmode\;\else{\unskip\nobreak\hfil
\penalty50\hskip1em\null\nobreak\hfil\;
\parfillskip=0pt\finalhyphendemerits=0\endgraf}\fi}

\newenvironment{proof}{\noindent \textbf{{Proof~}}}{\hfill $\blacksquare$}

% Align := properly in math mode
\mathchardef\ordinarycolon\mathcode`\:
\mathcode`\:=\string"8000
\def\vcentcolon{\mathrel{\mathop\ordinarycolon}}
\begingroup \catcode`\:=\active
  \lowercase{\endgroup
  \let :\vcentcolon
  }

%%% adaptive paranthesis
\makeatletter
\def\resetMathstrut@{%
	\setbox\z@\hbox{%
		\mathchardef\@tempa\mathcode`\[\relax
		\def\@tempb##1"##2##3{\the\textfont"##3\char"}%
		\expandafter\@tempb\meaning\@tempa \relax
	}%
	\ht\Mathstrutbox@\ht\z@ \dp\Mathstrutbox@\dp\z@}
\makeatother
\begingroup
\catcode`(\active \xdef({\left\string(}
\catcode`)\active \xdef){\right\string)}
\endgroup
\mathcode`(="8000 \mathcode`)="8000

%%% physics symbols
\newcommand{\nc}{\newcommand}
\nc{\rnc}{\renewcommand}
\nc{\lbar}[1]{\overline{#1}}
\nc{\bra}[1]{\langle#1|}
\nc{\ket}[1]{|#1\rangle}
\nc{\ketbra}[2]{|#1\rangle\!\langle#2|}
\nc{\braket}[2]{\langle#1|#2\rangle}

\nc{\proj}[1]{| #1\rangle\!\langle #1 |}
\nc{\avg}[1]{\langle#1\rangle}
%\rnc{\max}{\operatorname{max}}
\nc{\Rank}{\operatorname{Rank}}
\nc{\smfrac}[2]{\mbox{$\frac{#1}{#2}$}}
\nc{\tr}{\operatorname{Tr}}
\nc{\ox}{\otimes}
\nc{\dg}{\dagger}
\nc{\dn}{\downarrow}
\nc{\cA}{\boldsymbol{\cal A}}
\nc{\ca}{\boldsymbol{a}}
\nc{\cB}{{\cal B}}
\nc{\cC}{{\cal C}}
\nc{\cD}{{\cal D}}
\nc{\cE}{{\cal E}}
\nc{\cF}{{\cal F}}
\nc{\cG}{{\cal G}}
\nc{\cH}{{\cal H}}
\nc{\cI}{{\cal I}}
\nc{\cJ}{{\cal J}}
\nc{\cK}{{\cal K}}
\nc{\cL}{{\cal L}}
\nc{\cM}{\boldsymbol{\cal M}}
\nc{\cN}{{\cal N}}
\nc{\cO}{{\cal O}}
\nc{\cP}{{\cal P}}
\nc{\cQ}{\boldsymbol{\cal Q}}
\nc{\cq}{\boldsymbol{q}}
\nc{\cR}{{\cal R}}
\nc{\cS}{{\cal S}}
\nc{\cT}{{\cal T}}
\nc{\cV}{{\cal V}}
\nc{\cX}{{\cal X}}
\nc{\cx}{\boldsymbol{x}}
\nc{\cY}{{\cal Y}}
\nc{\cZ}{{\cal Z}}
\nc{\cW}{{\cal W}}
\nc{\csupp}{{\operatorname{csupp}}}
\nc{\qsupp}{{\operatorname{qsupp}}}
\nc{\var}{{\operatorname{var}}}
\nc{\rar}{\rightarrow}
\nc{\lrar}{\longrightarrow}
\nc{\polylog}{{\operatorname{polylog}}}
\nc{\wt}{{\operatorname{wt}}}
\nc{\av}[1]{{\left\langle {#1} \right\rangle}}
\nc{\supp}{{\operatorname{supp}}}

\def\x{\xi}

\nc{\RR}{{{\mathbb R}}}
\nc{\CC}{{{\mathbb C}}}
\nc{\FF}{{{\mathbb F}}}
\nc{\NN}{{{\mathbb N}}}
\nc{\ZZ}{{{\mathbb Z}}}
\nc{\PP}{{{\mathbb P}}}
\nc{\QQ}{{{\mathbb Q}}}
\nc{\UU}{{{\mathbb U}}}
\nc{\EE}{{{\mathbb E}}}
\nc{\id}{{\operatorname{id}}}

% wcl
\nc{\<}{\langle}
\rnc{\>}{\rangle}
\nc{\Hom}[2]{\mbox{Hom}(\CC^{#1},\CC^{#2})}
\nc{\rU}{\mbox{U}}

% switching between |i> or |e_i> for the standard basis
% \nc{\ob}[1]{e_{#1}}
\nc{\ob}[1]{#1}

\nc{\SEP}{{\text{SEP}}}
\nc{\NS}{{\text{NS}}}
\nc{\LOCC}{{\text{LOCC}}}
\nc{\PPT}{{\text{PPT}}}
\nc{\EXT}{{\text{EXT}}}
\nc{\Sym}{{\operatorname{Sym}}}

%% define color
\definecolor{darkblue}{RGB}{0,76,156}
\definecolor{darkkblue}{RGB}{0,0,153}
\definecolor{blue2}{RGB}{102,178,255}
\definecolor{myred}{RGB}{180,5,4}

%% color box
\RequirePackage[framemethod=default]{mdframed}
\newmdenv[skipabove=7pt,
skipbelow=7pt,
% backgroundcolor=darkblue!15,
innerleftmargin=5pt,
innerrightmargin=5pt,
innertopmargin=5pt,
leftmargin=0cm,
rightmargin=0cm,
innerbottommargin=5pt,
linewidth=1pt]{tBox}

\newmdenv[skipabove=7pt,
skipbelow=7pt,
backgroundcolor=blue2!25,
innerleftmargin=5pt,
innerrightmargin=5pt,
innertopmargin=5pt,
leftmargin=0cm,
rightmargin=0cm,
innerbottommargin=5pt,
linewidth=1pt]{dBox}

\newmdenv[skipabove=7pt,
skipbelow=7pt,
backgroundcolor=darkkblue!15,
innerleftmargin=5pt,
innerrightmargin=5pt,
innertopmargin=5pt,
leftmargin=0cm,
rightmargin=0cm,
innerbottommargin=5pt,
linewidth=1pt]{sBox}

\usepackage{enumitem}

\usepackage{authblk}

\providecommand{\customgenericname}{}
\newcommand{\newcustomtheorem}[2]{%
  \newenvironment{#1}[1]
  {%
   \renewcommand\customgenericname{#2}%
   \renewcommand\theinnercustomgeneric{##1}%
   \innercustomgeneric
  }
  {\endinnercustomgeneric}
}

\newcustomtheorem{customthm}{Theorem}
\newcustomtheorem{customlemma}{Lemma}

%% comments

% \newcommand{\kf}[1]{}

\begin{document}
\title{Quantum Search with Prior Knowledge}

\author[1,2]{Xiaoyu He}
\author[1,2]{ Jialin Zhang}
\author[1,2]{ Xiaoming Sun\thanks{sunxiaoming@ict.ac.cn}}

\affil[1]{Institute of Computing Technology, Chinese Academy of Sciences}
\affil[2]{University of Chinese Academy of Sciences}

\date{}
\maketitle
\vspace{-1cm}
\begin{abstract}
Search-base algorithms have widespread applications in different scenarios. Grover's quantum search algorithms and its generalization, amplitude amplification, provide a quadratic speedup over classical search algorithms for unstructured search. We consider the problem of searching with prior knowledge. More preciously, search for the solution among $N$ items with a prior probability distribution. This letter proposes a new generalization of Grover's search algorithm which performs better than the standard Grover algorithm in average under this setting. We prove that our new algorithm achieves the optimal expected success probability of finding the solution if the number of queries is fixed. 
\end{abstract}
\section{Introduction}
Quantum searching is an important problem. Many computationally difficult problem, including deciphering some popular encryption scheme\cite{wiener1994efficient,daemen1999aes}, Monte Carlo tree search for game \cite{coulom2006efficient} and NP-hard problem\cite{bennett1997strengths}, can be reduced to searching problem. Grover's quantum search algorithm\cite{grover1997quantum,grover1998quantum} shows a quadratic speed-up over classical search algorithm. The problem is represented by an oracle function, which gives the solution by flipping the sign of corresponding quantum state, among an unsorted list of size $N$.

Grover's search algorithm applies oracle and preparation of initial state  iteratively to implement the rotation on the 2-dimensional Hilbert space spanned by the initial state and target state. It rotates a fixed angle, determined by size $N$ and the number of solutions, towards the target every time and it may miss the solution when the fixed angle is not exact.  Quantum amplitude amplification\cite{brassard2002quantum} generalized Grover's standard algorithm to make the success probability of search reach $100\%$. 
The exact query complexity of this algorithm to find one solution among $N$ items exactly is the same as Grover's algorithm, $\lceil\frac{\pi}{4\arcsin\sqrt{1/N}}-\frac{1}{2}\rceil\approx O(\sqrt{N})$, which is optimal\cite{zalka1999grover}.
 
These searching algorithms are also suitable for finding one of $M$ solutions among items of size $N$. However, the case is complicated when the number of solutions is unknown since too little iterations cannot reach the target state and too many iterations may pass by the target state. Grover's $\pi/3$-algorithm\cite{grover2005fixed} and a critically damped quantum search algorithm \cite{mizel2009critically} were proposed to handle this fixed point search problem. Then an amplitude amplification improvement\cite{yoder2014fixed} was made to achieve quadratic speed-up of query complexity.

Grover's quantum search algorithm and above generalizations consider all the items to be searched are of equal status. However, the items are rarely equally important. We note that heuristic methods are commonly used in classical searching algorithms. For practical search problems, some prior knowledge is usually known about where the solution is more likely to be located. Then a biased probability distribution can be obtained from the current problem to instruct how to prune branches or enter branches of the search tree randomly to improve the expected success probability of search algorithm. A famous modern example is Google's go AI: AlphaGo and AlphaZero\cite{silver2017mastering,silver2018general}, despite the emphasis of their work is machine learning, they search for the best placement instructed by knowledge drawn from the current pattern and check if a placement is good enough by some complex calculation.

This paper focuses on how to take advantage of prior knowledge to improve the success probability of quantum search. Here we formalize the prior knowledge as a probability distribution $\textbf{p}=(p_1,p_2,\dots,p_N)$, which indicates the probability about the location of the solution. Assuming $p_1\ge p_2\ge\dots \ge p_N$, then the classical maximum expected success is obviously $\sum_{i=1}^T p_i$, which is reached by querying the item from $1$ to $T$. A naive example that shows the advantage of prior knowledge is $\textbf p=(0.25,0.25,0.25,0.25,0,0,0,0)$ with $N=8$, the success probability in one oracle query with or without this prior knowledge is $100\%$ and about $78.1\%$ respectively, while the classical success probability in one query is $25\%$. 

Actually, when the distribution of the solution is uniform, repeating the iteration of Grover's algorithm can reach the maximum average success probability if the number of queries is limited to some given integer $T$. 
Our result shows that by replacing the initialization and diffusion operation of Grover's algorithm with some parametric operations related to the given probability distribution, we can improve the expected success probability to optimal for any given number of oracle queries.

Montanaro\cite{montanaro2010quantum} shows an algorithm that can find the solution with an asymptotically optimal expected number of queries. His quantum algorithm is a Las Vegas algorithm, that repeats many times until finding the solution. The expected number of oracle queries is proven a constant multiple of the optimal value. 
Nevertheless, the constant of time complexity is important in some scenarios. An algorithm with asymptotically optimal running time is not good enough for time-sensitive tasks, such as game AI which must output a solution in limited time, while running time directly determines the strength of artificial intelligence.

On the other hand, we note that on actual physical quantum devices, the fidelity of circuit decreases rapidly with the number of oracle queries increases. A three-qubit Grover search experiment on scalable quantum computing system\cite{figgatt2017complete} uses only one oracle query, it seems that search one solution among eight items using one oracle query performs better than using two oracle queries, although the relationship between their theoretical success probability is opposite. Our result can offer a trade-off between theoretical success probability and experimental error of circuit, which results in an algorithm with a suitable number of oracle queries to reach optimal experiment success probability.

This paper is organized as follows. We show the details of our search algorithm (Sec. II). When given a probability distribution of solution and a limited number of oracle queries, our algorithm reaches an optimal expected success probability. Then the proof of optimality of our method is shown in Appendix. We then show how quantum can speed up heuristic search of a game tree, while the probability distribution of solution can be approximated by some modern techniques such as machine learning(Sec. III). We also demonstrated our algorithm for some biased distribution of solution on IBM's quantum computing system IBMQ\cite{cross2018ibm}, which shows the improvement of experimental success probability on a physical device(Sec. IV).

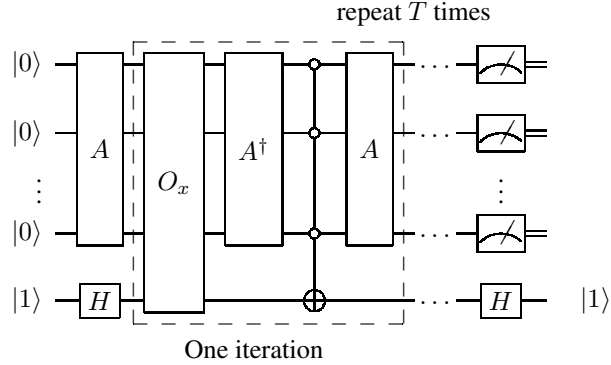
\begin{figure}[htbp] 
  \centerline{
  %\Large
\Qcircuit @C=0.8em @R=1.2em {
   &    &   & &              &     &   \mbox{repeat $T$ times}& & &\\
   \lstick{\ket{0}} & \multigate{3}{A}  &\multigate{4}{O_x} & \multigate{3}{A^{\dagger}}&\ctrlo{1}&\multigate{3}{A}&\qw & \dots  &  &\meter &\cw\\
   \lstick{\ket{0}} & \ghost{A}         &\ghost{O_x}	   & \ghost{A^{\dagger}}  	  &\ctrlo{2}&\ghost{A}&\qw &\dots & &\meter &\cw \\
   \lstick{\vdots}&&&&&&&&&\vdots\\
   \lstick{\ket{0}} & \ghost{A}    		&\ghost{O_x}	   & \ghost{A^{\dagger}} 	  &\ctrlo{1} 	&\ghost{A}&\qw &\dots & &\meter &\cw \\
   \lstick{\ket{1}} & \gate{H}   		&\ghost{O_x}	   & \qw 	  &\targ 	&\qw &\qw \gategroup{2}{3}{6}{6}{.8em}{--}&\dots & &\gate{H} & \qw &\rstick{\ket{1}} \\
   &                &                   &\mbox{One iteration} &                           &         & &        &  & &
}
}
  \caption{We provide a circuit for our quantum search algorithm in $T$ oracle queries. Here $A$ is an unitary that prepares the initial state $A\ket{0}=\sum_{i=1}^N \sqrt{q_i} |i\rangle + \sqrt{1-\sum_{i=1}^N q_i}|0\rangle$. $O_x$ is the oracle that flip an ancilla qubit, that $O_x\ket{x}\ket{b}=\ket{x}\ket{b\oplus1}$ and $O_x\ket{x^\perp}\ket{b}=\ket{x}\ket{b\oplus1}$}.
  \label{fig:alg}
\end{figure}
\vspace{0.cm}
\section{An optimal quantum search algorithm}
Grover's quantum search algorithm can be described as a rotation in the space spanned by initial state and the target solution state. Given the probability distribution $\textbf p=(p_1,p_2,\dots,p_N)$ as prior knowledge about where the solution might be located, we show that by replacing the initial state with $|s\rangle=\sum_{i=1}^N \sqrt{q_i} \ket{i}+ \sqrt{1-\sum_{i=1}^N q_i}\ket{0}$, where $q_i$ are some parameters that $q_i\ge 0$ and $\sum_{i=1}^N q_i\le 1$, we can improve the expected success probability. Let $R_s=I-2\ketbra{s}{s}$, which is the reflection operation about $|s\rangle$. Consider the transformation $(R_sO_x)^T$ applied to $|s\rangle$. For a fixed solution $x$, let $|x^\perp\rangle={1}/({1-q_x})\sum_{i\neq x}\sqrt{q_i}|i\rangle$, the state after $(R_sO_x)^T$ applied to $|s\rangle$ is $\sin\left((2T+1)\arcsin\sqrt{q_x}\right)|x\rangle + \cos\left((2T+1)\arcsin\sqrt{q_x}\right) |x^{\perp}\rangle$. The success probability of getting solution $x$ is $\sin^2\left((2T+1)\arcsin\sqrt{q_x}\right)$
if we use measurement with standard basis. Hence the expected success probability of our algorithm is
\begin{equation}
\sum_{x=1}^N p_i\sin^2((2T+1)\arcsin\sqrt{q_i}).
\end{equation}
Then we have following theorem.
\begin{theorem}\label{thm:lower}
Given distribution $\textbf p=(p_1,p_2,\dots,p_N)$ indicating where the solution might be, for any non-negative vector $\textbf{q}=(q_1,\dots,q_N)$ that $\sum_{i=1}^N q_i\le 1$, the expected success probability of quantum search in $T$ queries can be 
\begin{equation}\label{eq:optimization}
\text{\rm ESP}_T(\textbf p,\textbf{q})=\sum_{i=1}^N p_i\sin^2((2T+1)\arcsin\sqrt{q_i})
\end{equation}
\end{theorem}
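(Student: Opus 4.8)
Since the paragraph preceding the theorem already sketches the computation, the plan is simply to make that argument rigorous: analyse the circuit of Figure~\ref{fig:alg} one candidate solution at a time, and then average against the prior. Fix a potential solution $x\in\{1,\dots,N\}$ and put $\theta_x=\arcsin\sqrt{q_x}\in[0,\pi/2]$, and let $\ket s=A\ket0=\sum_{i=1}^N\sqrt{q_i}\ket i+\sqrt{1-\sum_i q_i}\ket0$ be the prepared state. Since the amplitude of $\ket s$ on $\ket x$ is $\sqrt{q_x}=\sin\theta_x$, writing $\ket{x^\perp}$ for the normalised component of $\ket s$ orthogonal to $\ket x$ gives $\ket s=\sin\theta_x\ket x+\cos\theta_x\ket{x^\perp}$. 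The first key point is that the (real) two-dimensional subspace $V_x=\operatorname{span}\{\ket x,\ket{x^\perp}\}$ is invariant under one iteration of the loop: the computational oracle $O_x$ conjugated by the Hadamard on the $\ket1$ ancilla acts on the search register as the phase oracle $I-2\proj x$, which fixes $\ket{x^\perp}$ and negates $\ket x$; and the block $A\,(I-2\proj 0)\,A^{\dagger}$ equals $I-2\proj s=R_s$, which negates $\ket s$ and fixes its orthocomplement. Both operators are real on $V_x$ and map $V_x$ to itself.

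On $V_x$ each of $O_x$ and $R_s$ is a reflection, so their product $R_sO_x$ lies in $SO(2)$, i.e.\ it is a rotation; tracking the two mirror axes shows that applying it $T$ times to $\ket s$ carries the amplitude on $\ket x$ to $\sin((2T+1)\theta_x)$, so that $(R_sO_x)^T\ket s=\sin((2T+1)\theta_x)\ket x+\cos((2T+1)\theta_x)\ket{x^\perp}$ up to an irrelevant global phase, which one checks by a short induction on $T$. Because $\braket x{x^\perp}=0$, a computational-basis measurement of this state returns $x$ with probability $\sin^2((2T+1)\arcsin\sqrt{q_x})$. Finally, the true solution equals $x$ with prior probability $p_x$, so linearity of expectation over the prior gives the algorithm's expected success probability $\sum_{x=1}^N p_x\sin^2((2T+1)\arcsin\sqrt{q_x})=\text{ESP}_T(\textbf p,\textbf q)$, which is the claim.

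The mathematics here is just the textbook fact that the composition of two reflections is a rotation, restricted to the plane $V_x$. I expect the only delicate part to be bookkeeping: verifying that one loop block of the circuit in Figure~\ref{fig:alg} really is $R_sO_x$ up to a global phase (in particular that the controlled-on-$\ket0$ gate together with the surrounding $A$ and $A^{\dagger}$ realises $R_s$ with the correct sign, and that the ancilla prepared in $\ket1$ with the flanking $H$ gates turns $O_x$ into the $\pm1$ phase oracle), and pinning down the direction of the per-step $2\theta_x$ rotation so that the argument of the sine comes out as $(2T+1)\theta_x$ rather than $(2T-1)\theta_x$.
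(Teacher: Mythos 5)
Your proposal is correct and follows essentially the same route as the paper, which establishes Theorem~\ref{thm:lower} via the derivation in the paragraph preceding it: decompose $\ket s=\sin\theta_x\ket x+\cos\theta_x\ket{x^\perp}$, observe that $(R_sO_x)$ acts as a rotation by $2\theta_x$ on the invariant plane, and average the resulting success probability $\sin^2((2T+1)\arcsin\sqrt{q_x})$ against the prior. Your write-up is in fact slightly more careful than the paper's (your normalised $\ket{x^\perp}$ corrects the paper's $1/(1-q_x)$ prefactor, and you flag the ancilla/phase-kickback bookkeeping explicitly), but the underlying argument is identical.
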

Therefore we can choose $\textbf{q}$ to optimize the success probability. Note that $\sin^2((2T+1)\arcsin\sqrt{q_i})$ reaches its maximum value $1$ when $q_i=\sin^2\frac{\pi}{2(2T+1)}$ and it increases monotonously with $q_i$ in $[0,\sin^2\frac{\pi}{2(2T+1)}]$, so we can add extra constraints that $q_i\le \sin^2\frac{\pi}{2(2T+1)}$ , without  decreasing the maximum value of Eq.(\ref{eq:optimization}). Let $\textbf{q}^*\!(\textbf p)$ denote the parameters that maximize Eq.(\ref{eq:optimization}).

\begin{align}\label{eq:extra}
\textbf{q}^*\!(\textbf p)=\mathop{\arg\max}_{\textbf{q}}&\sum_{i=1}^np_i\sin^2((2T+1)\arcsin\sqrt{q_i})\\
\mbox{s.t. }&\sum_{i=1}^n q_i\le 1,\\
&\forall i:\ 0\le q_i\le\sin^2\frac{\pi}{2(2T+1)}.
\end{align}
Since $\sin^2((2T+1)\arcsin\sqrt{q_i})$ is concave when $q_i\in[0,\sin^2\frac{\pi}{2(2T+1)}]$, $\sum_{i=1}^np_i\sin^2((2T+1)\arcsin\sqrt{q_i})$ is a concave function about $(q_1,\dots,q_N)$. Eq~\ref{eq:extra} is a concave optimization with linear constraints, which can be solved as a convex optimization problem by Lagrange multiplier method, gradient descent or some other contemporary methods.

The perturbation of $\textbf p$ is robust to $\textbf{q}^*\!(\textbf p)$. If we have an estimation of distribution $\hat{\textbf p}$ that the $L_1$ distance $d_1(\textbf p,\hat{\textbf p})=\sum_{i=1}^N |p_i-\hat{p}_i|\le\epsilon$, the expected success probability using $\textbf{q}^*\!(\hat{\textbf p})$ is 
\begin{align*}
&\text{ESP}_T(\textbf p,\textbf{q}^*\!(\hat{\textbf p}))\\
=&\text{ESP}_T(\hat{\textbf p},\textbf{q}^*\!(\hat{\textbf p}))+\text{ESP}_T(\textbf p-\hat{\textbf p},\textbf{q}^*\!(\hat{\textbf p}))\\
\ge&\text{ESP}_T(\hat{\textbf p},\textbf{q}^*\!(\textbf p))+\text{ESP}_T(\textbf p-\hat{\textbf p},\textbf{q}^*\!(\hat{\textbf p}))\\
=&\text{ESP}_T(\textbf p,\textbf{q}^*\!(\textbf p))+\text{ESP}_T(\hat{\textbf p}-\textbf p,\textbf{q}^*\!(\textbf p))+\text{ESP}_T(\textbf p-\hat{\textbf p},\textbf{q}^*\!(\hat{\textbf p}))\\
\ge&\text{ESP}_T(\hat{\textbf p},\textbf{q}^*\!(\textbf p))-\sum_{i=1}^N|p_i-\hat{p}_i|\\
\ge&\text{ESP}_T(\hat{\textbf p},\textbf{q}^*\!(\textbf p))-\epsilon,
\end{align*}
which means a good estimation of the precise distribution can approach the optimal expected success probability well.

Our quantum search algorithm is a variation of Grover's quantum search algorithm. Our algorithm starts with initial state $|s\rangle=\sum_i^N \sqrt{q_i^*}|i\rangle$ and applies $(R_s O_x)^T$ to $|s\rangle$, where $R_s=I-2|s\rangle\langle s|$. We claim that our algorithm is optimal, which means the maximum expected success probability in $T$ queries can be reached by our algorithm. The proof of the optimality is given in Appendix.

The maximum expected success probability by $T$ classical query can be achieved by quantum algorithm using only $\lceil\sqrt{T}\rceil$ oracle queries. Let $q_i=\sin^2\frac{1}{2\lceil\sqrt{T}\rceil+1}$ if $i\le\lceil\sqrt{T}\rceil$, $q_i=0$ otherwise. Then $\text{ESP}_{\lceil \sqrt{T}\rceil}(\textbf{p},\textbf{q})\ge\sum_{i=1}^T p_i$, which means quantum has at least quadratic speed-up over classical when given probability distribution of the solution.

\section{Application to game tree search}
Solving a game is an important application of search algorithm. Ambainis\cite{ambainis2017quantum} proposed a quantum algorithm that can solve $2$-player game quadratically faster than the classical deterministic algorithm. On the other hand many heuristic methods are applied to classical search algorithms such as $k$-SAT\cite{hansen2019faster}, which can quadratically speed up by \cite{dantsin2005quantum}. Combining heuristic methods and quantum search algorithms is of great interest. Different from those algorithms above that search for a solution at leaf nodes, solving a game decides which child of root node is optimal. We show a framework of solving game by calling our algorithm recursively and give some evidence that our algorithm would perform well.

We call a choice good if a player will win the game eventually after taking it, no matter how his adversary acts afterwards. In a probabilistic search tree with edge suggesting which choice is more likely to be good, such as Monte Carlo tree after trained, we can obtain some knowledge formed as a probability distribution about each node to be good.
\begin{figure}[h!]
\centering
\begin{tikzpicture}[
    scale = 1, transform shape, thick,
    every node/.style = {draw, circle, minimum size = 6mm},
    grow = down,  % alignment of characters
    level 1/.style = {sibling distance=4cm},
    level 2/.style = {sibling distance=2cm},
    level distance = 1.2cm
  ]
  \node[shape = circle, draw, line width = 1pt,
          minimum size = 6mm, inner sep = 0mm, font = \sffamily\large,] (Start){}
   child {   node [fill=gray!30] (A) {0}
     child { node [color=white] (B) {00}}
     child { node [color=white] (C) {01}}
   }
   child {   node [fill=gray!30] (D) {1}
     child { node [color=white] (E) {10}}
     child { node [color=white] (F) {11}}
   };

  % Labels
  \begin{scope}[nodes = {draw = none}]
    \path (Start) -- (A) node [near start, left]  {$0.4$};
    \path (A)     -- (B) node [near start, left]  {$0.8$};
    \path (A)     -- (C) node [near start, right] {$0.2$};
    
    \path (Start) -- (D) node [near start, right] {$0.6$};
    \path (D)     -- (E) node [near start, left]  {$0.5$};
    \path (D)     -- (F) node [near start, right] {$0.5$};
    \begin{scope}[]
      \node at (B) {$\cdots$};
      \node at (C) {$\cdots$};
      \node at (E) {$\cdots$};
      \node at (F) {$\cdots$};
    \end{scope}
  \end{scope}
\end{tikzpicture}
\vspace*{-3mm}
\caption{A probabilistic game tree where the value on the edge indicating  the probability of the choice to be good enough. }
\label{fig:tree}
\end{figure}
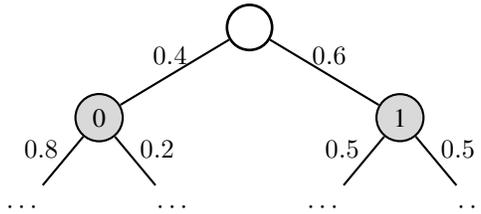

The classical search strategy at step $t$ can be formed as a probability distribution $\textbf{g}_t(s_1\dots s_t)$, where $\textbf{g}_t(s_1\dots s_t)_i$ is the probability that entering branch $i$ at pattern $s_1s_2\dots s_{t}$ for next step. These strategies can be obtained from domain expert knowledge or by machine learning. Taking cheese game as an example, when $s_1\dots s_{t}$ represents the first $t$ placement positions that two players alternately played, $\textbf{g}_t(s_1\dots s_t)_i$ is the probability that $i$ might be the best placement.  Actually, $\textbf{g}_t(s_1\dots s_{t})$ can be seen as an approximation of the real optimal solution. When the number of queries is limited to $T$, the first $T$ nodes with the highest probability can be evaluated in classical algorithm, while directly using Grover's algorithm after ranking the probabilities implies a traverse over $\Theta(T^2)$ most likely choices. Our algorithm is better than directly calling Grover's algorithm after ranking, as shown in Fig.\ref{fig:heuesp}.
\begin{figure}[htbp!]
	\centering
	\includegraphics[width=0.8\textwidth]{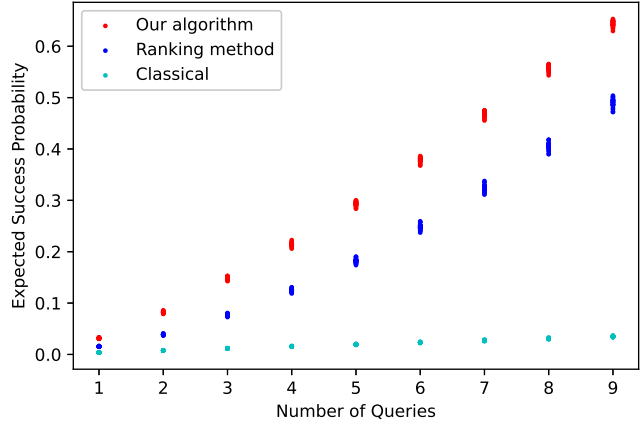}  
\caption{Comparison of the expected success probability between classical, ranking method and our algorithm for probability distributions of size 512 which are generated uniformly randomly, with $100$ samples of probability distributions for every number of queries. } 
	\label{fig:heuesp}  
\end{figure}
Since the strategy $\textbf{g}_t$ can be calculated efficiently by classical computer, the quantum operator that generate state $\sum_i\sqrt{\textbf{q}^*(\textbf{g}(s_1\dots s_t))_i}\ket{i}$ can be implemented efficiently, as shown in Fig.\ref{fig:treestate}.
\hspace{10pt}
\begin{figure}[htbp] 
  \centerline{
\Qcircuit @C=0.8em @R=1.5em {
   \lstick{\mbox{ancilla}} & {/}\qw& \multigate{4}{G_t}  &\multigate{5}{Q^*}&\multigate{4}{G_t^\dagger}  &\qw & \\
   \lstick{\ket{s_1}} &{/} \qw & \ghost{G_t}  &\ghost{Q^*}&\ghost{G_t^\dagger} &\qw &\rstick{\ket{s_1}}\\
   \lstick{\ket{s_2}} & {/}\qw & \ghost{G_t}  &\ghost{Q^*}&\ghost{G_t^\dagger} &\qw & \rstick{\ket{s_2}}\\
   \lstick{\vdots} &  &    &  &  &  \vdots & & &  & \\   
   \lstick{\ket{s_{t}}} &{/} \qw & \ghost{G_t}  &\ghost{Q^*} &\ghost{G_t^\dagger} &\qw &\rstick{\ket{s_t}} \\
   \lstick{\ket{0}} &{/} \qw & \qw  &\ghost{Q^*} &\qw &\qw &\rstick{\sum_i\sqrt{\textbf{q}^*(\textbf{g}(s_1\dots s_t))_i}\ket{i}}& &&&&&&&&&\\
}
}
  \caption{We provide a circuit for generating the optimal initial state of quantum search corresponding to distribution $\textbf{g}(s_1\dots s_t)$. $G_t$ calculate the probability distribution $\textbf{g}(s_1\dots s_t)$ as classical algorithm and store it to ancillae qubits. $Q^*$ calculate $\textbf{q}^*(\textbf{g}(s_1\dots s_t))$ and then generate the corresponding quantum state. Then ancillae qubits are recovered by reversing the circuit.}
  \label{fig:treestate}
\end{figure}
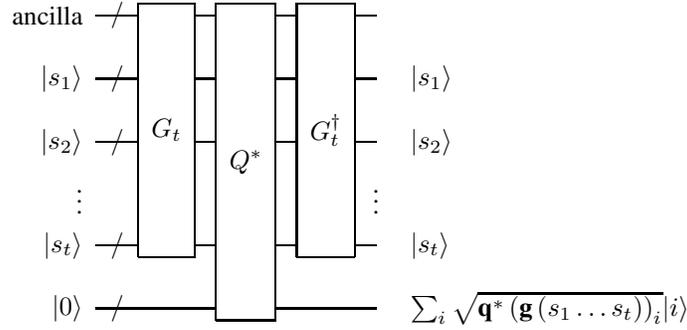

Note that when Alice plays a game with Bob, to determine whether a choice $i$ is good for Alice, Alice only needs to predict whether Bob can take a good choice after Alice takes choice $i$. This means that the oracle determines whether a choice $i$ is good for Alice can be implemented by searching for a good choice for Bob at the pattern after Alice takes choice $i$. Hence we can call our search algorithm recursively, search to a certain depth in the game tree, and then use ``approximated oracle'' such as neural network or rule-based evaluation. This implies a quantum algorithm framework similar to practical classical game tree search, which can speed up the game tree search. Our algorithm is optimal at each level of the tree locally, we believe it would perform well globally as well.

\begin{figure*}[htbp!]
\subfigure[]{
	\includegraphics[width=0.47\textwidth]{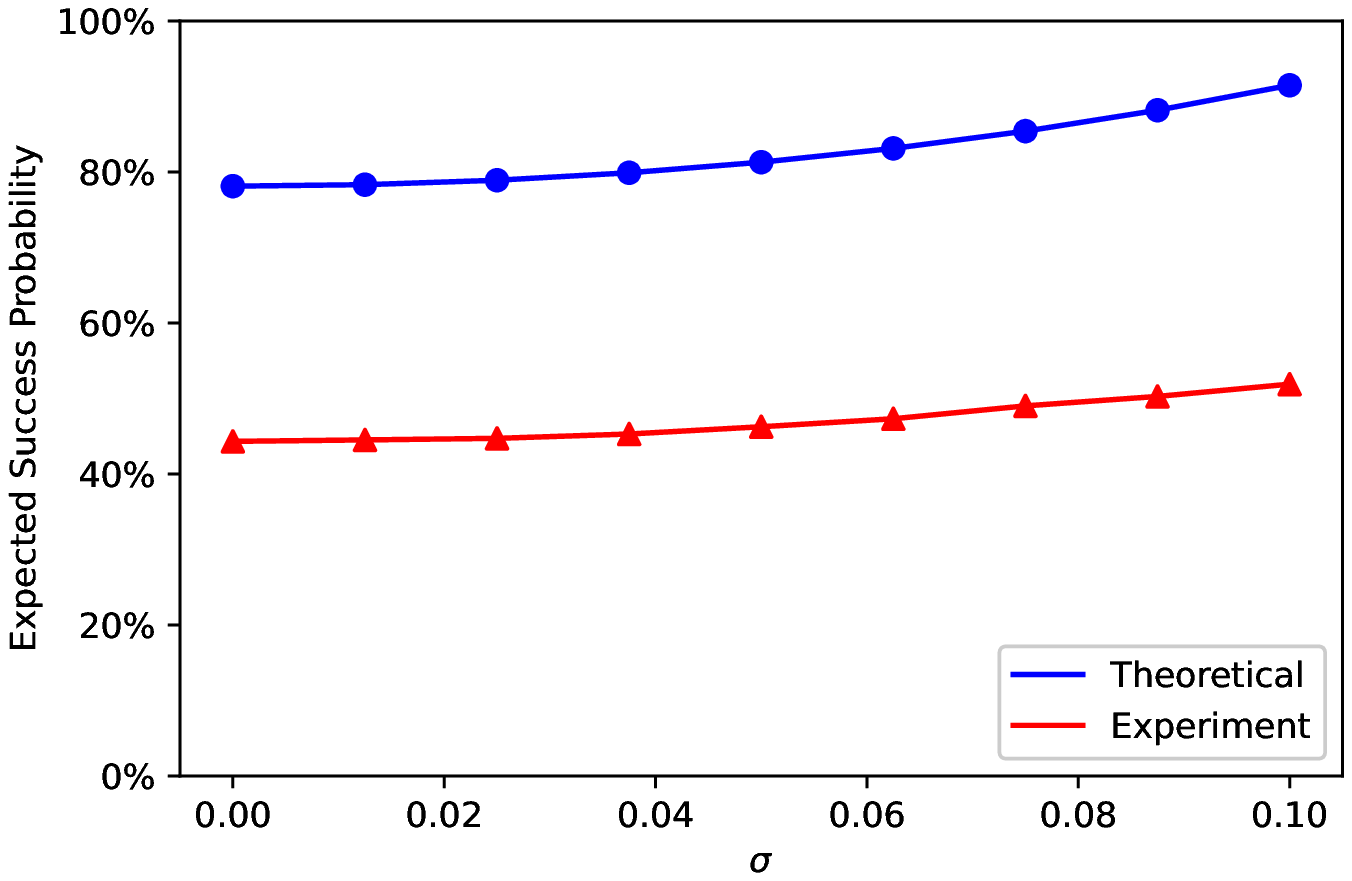}  
	\label{fig:esp}  
}
\subfigure[]{
	\includegraphics[width=0.47\textwidth]{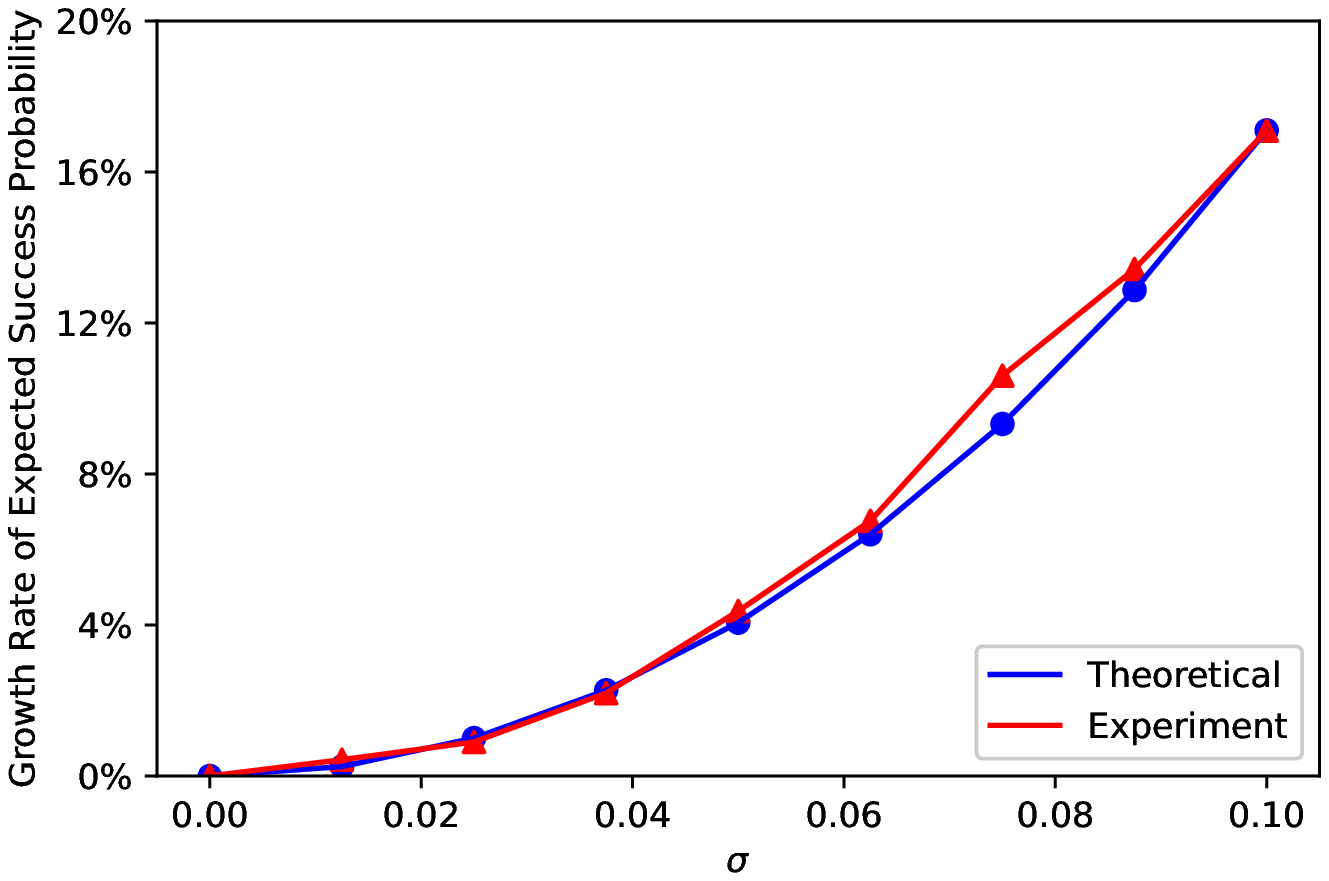}
	\label{fig:desp}  
}
\caption{Quantum search using a single oracle query, for finding one solution among $8$ items on $3$ qubits of IMB's $5$-qubit computing system ibmqx2. (a) Comparison of the theoretical maximum expected success probability and  the experiment result of expected success probability. (b)Comparison of the growth rate of maximum expected success probability than quantum search without prior knowledge and the experimental growth rate of expected success probability than result without prior knowledge. The growth rate here is the ratio of the expected success rate with prior knowledge to the success rate without prior knowledge, then minus $1$.} 
\end{figure*}
\vspace{0.2cm}

\section{Implementation on quantum computing system}
We note that there are many works focusing on implementing Grover's algorithm on quantum computers\cite{figgatt2017complete,brickman2005implementation,shi2017coherence}, they show great techniques decreasing the depth and size of circuit for Grover's algorithm and then obtain great success on the fidelity of experiment circuit. We implement our search algorithm simply since the key of this paper is to make use of prior knowledge, instead of optimizing circuit for search algorithm. We implement our algorithm on a five-qubit quantum computer provided by IBMQ\cite{cross2018ibm}. We consider search among items of size $8$ with one solution. The probability distribution of the solution is a simple biased distribution formed as $(\frac{1}{8}+\sigma,\frac{1}{8}+\sigma,\frac{1}{8}+\sigma,\frac{1}{8}+\sigma,\frac{1}{8}-\sigma,\frac{1}{8}-\sigma,\frac{1}{8}-\sigma,\frac{1}{8}-\sigma)$, for $\sigma\in\{\frac{1}{80},\frac{2}{80},\frac{3}{80},\frac{4}{80},\frac{5}{80},\frac{6}{80},\frac{7}{80},\frac{8}{80}\}$.

The circuit implementation of our algorithm is shown in Figure~\ref{fig:frame}, the corresponding parameter $\theta$ for different distribution is provided in  Appendix.
\begin{figure}[h!] 
  \centerline{
\Qcircuit @C=0.7em @R=0.75em {
   \lstick{\ket{0}}   &   \gate{H}  &\multigate{2}{O_x}&   \gate{H}&   \gate{X}  &\ctrl{2}&   \gate{X}&   \gate{H} \\
   \lstick{\ket{0}}   &   \gate{H}   &\ghost{O_x}&   \gate{H}&   \gate{X}  &\ctrl{1}&   \gate{X}&   \gate{H}  \\
   \lstick{\ket{0}}   &   \gate{R_y(\theta)} &\ghost{O_x}&   \gate{R_y(-\theta)} &   \gate{X} & \gate{Z}&   \gate{X}&   \gate{R_y(\theta)}      \\
}
}
  \caption{A 3-qubit circuit for a 'half-half' distribution in $1$ oracle query. Here $\theta$ depends on the deviation of distribution.}
  \label{fig:frame}
\end{figure}
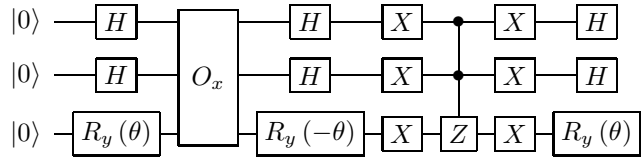

The eight oracles for different solution is implemented directly as shown in Table~\ref{table:oracle}, directly constructed from textbook\cite{nielsen2002quantum}.
The results are output by measurement on standard basis.
For eight biased and one uniform distribution, eight different oracles, we run our algorithm on IBMQ\cite{cross2018ibm} with 'ibmqx2' backend, $8192$ shots for each. We conclude the experiment results in FIG.~\ref{fig:esp} and FIG.~\ref{fig:desp}. It shows that the experimental success probability is lower than theoretical, which is caused by the noise of the device. However, in respect of the growth rate of success probability caused by prior knowledge, the experiment result agrees with the theory well. 

\begin{table}[!htbp]
	\centering 
	\caption{Oracle for different solution} 
	\label{table:oracle}  
	\begin{tabular}{|c|c|c|c|}  
		\hline  
		Solution& Oracle& Solution & Oracle\\
		\hline  
		 & & &\\[-6pt]
		000 &  
\Qcircuit @C=0.7em @R=1em {
  & & &\lstick{\ket{q_0}}   & \gate{X} &  \ctrl{2} &\gate{X} &\qw \\
  & & &\lstick{\ket{q_1}}   & \gate{X} &  \ctrl{1} &\gate{X} &\qw \\
  & & &\lstick{\ket{q_2}}   & \gate{X} &  \gate{Z}    &\gate{X} &\qw      \\
  & & & &\\
}
&100&

\Qcircuit @C=0.7em @R=1em {
  & & &\lstick{\ket{q_0}}   & \gate{X} &  \ctrl{2} &\gate{X} &\qw \\
  & & &\lstick{\ket{q_1}}   & \gate{X} &  \ctrl{1} &\gate{X} &\qw \\
  & & &\lstick{\ket{q_2}}   &\qw &  \gate{Z} &\qw&\qw      \\
  & & & &\\
} \\  		
		\hline  
		 & & &\\[-6pt]
		001 &  
\Qcircuit @C=0.7em @R=1em {
  & & &\lstick{\ket{q_0}}   & \qw &  \ctrl{2} &\qw &\qw \\
  & & &\lstick{\ket{q_1}}   & \gate{X} &  \ctrl{1} &\gate{X} &\qw \\
  & & &\lstick{\ket{q_2}}   & \gate{X} &  \gate{Z}    &\gate{X} &\qw      \\
  & & & &\\
}
&101&

\Qcircuit @C=0.7em @R=1em {
  & & &\lstick{\ket{q_0}}   & \qw &  \ctrl{2} &\qw &\qw \\
  & & &\lstick{\ket{q_1}}   & \gate{X} &  \ctrl{1} &\gate{X} &\qw \\
  & & &\lstick{\ket{q_2}}   &\qw &  \gate{Z} &\qw&\qw      \\
  & & & &\\
} \\  		
		\hline  
		 & & &\\[-6pt]
		010 &  
\Qcircuit @C=0.7em @R=1em {
  & & &\lstick{\ket{q_0}}   & \gate{X} &  \ctrl{2} &\gate{X} &\qw \\
  & & &\lstick{\ket{q_1}}   & \qw &  \ctrl{1} &\qw &\qw \\
  & & &\lstick{\ket{q_2}}   & \gate{X} &  \gate{Z}    &\gate{X} &\qw      \\
  & & & &\\
}
&110&

\Qcircuit @C=0.7em @R=1em {
  & & &\lstick{\ket{q_0}}   & \gate{X} &  \ctrl{2} &\gate{X} &\qw \\
  & & &\lstick{\ket{q_1}}   & \qw &  \ctrl{1} &\qw &\qw \\
  & & &\lstick{\ket{q_2}}   &\qw &  \gate{Z} &\qw&\qw      \\
  & & & &\\
} \\  		
		\hline  
		 & & &\\[-6pt]
		011 &  
\Qcircuit @C=0.7em @R=1em {
  & & &\lstick{\ket{q_0}}   & \qw &  \ctrl{2} &\qw &\qw \\
  & & &\lstick{\ket{q_1}}   & \qw &  \ctrl{1} &\qw &\qw \\
  & & &\lstick{\ket{q_2}}   & \gate{X} &  \gate{Z}    &\gate{X} &\qw      \\
  & & & &\\
}
&111&

\Qcircuit @C=0.7em @R=1em {
  & & &\lstick{\ket{q_0}}   & \qw &  \ctrl{2} &\qw &\qw \\
  & & &\lstick{\ket{q_1}}   & \qw &  \ctrl{1} &\qw &\qw \\
  & & &\lstick{\ket{q_2}}   &\qw &  \gate{Z} &\qw&\qw      \\
  & & & &\\
} \\  		
				\hline
	\end{tabular}
\end{table}

\vspace{0.2cm}
\section{Conclusion}
We consider quantum search with prior knowledge. We provide an optimal quantum search algorithm that achieves the maximum expected success probability with a given number of queries when the probability of solution is given. We show a mathematical proof of the optimality of our algorithm. In particular, the quantum advantage of our algorithm increase as the distribution of solution becomes more biased. We also provide a framework showing how to reduce classical heuristic search algorithm for game tree to our quantum search algorithm.
Since the advantage of our algorithm has been shown by implementation on state-of-the-art device, search problem of larger size and practical search problem, such as searching a game tree, can be implemented hopefully in the future with the development of physical devices.

\bibliographystyle{unsrt}
\bibliography{BibDoi}

\clearpage
%\onecolumngrid
\begin{center}
\vspace*{\baselineskip}
{\textbf{\Large Appendix}}\\
\end{center}
\renewcommand{\theequation}{A\arabic{equation}}
\renewcommand{\thetheorem}{A\arabic{theorem}}
\setcounter{equation}{0}
\setcounter{figure}{0}
\setcounter{table}{0}
\setcounter{section}{0}

%\twocolumngrid
\begin{appendix}
\section{How to calculate $\textbf{q}*$}
Here we show how to find the optimal parameters $\textbf{q}^*=(q_1^*,\dots,q_N^*)$  by Lagrange multiplier method when $T=1$.

Note that $\sin^2((2T+1)\arcsin\sqrt{q_i})$ is a polynomial of degree $2T+1$ with respect to $q_i$. When $T=1$, the optimization function is $\sum_{i=1}^N p_i \sin^2(3\arcsin\sqrt{q_i})=\sum_{i=1}^N p_i q_i(3-4q_i)^2$. Note that the when $n\le 4$, the success probability can reach $1$. When $n>4$, Eq~\ref{eq:extra} always reaches its maximum value when $\sum_{i=1}^N q_i=1$ since the function increases monotonously with increase of $q_i$.  The Lagrangian function
\begin{equation}
\mathcal{L} (\bf{q},\lambda)=\sum_{i=1}^N p_i q_i(3-4q_i)^2-\lambda(1-\sum_{i=1}^N q_i)
\end{equation}
By $\nabla\mathcal{L}(\bf{q},\lambda)=0$, we get 
\begin{equation}
\left\{
\begin{aligned}
p_i(48q_i^2-48q_i+9)+\lambda=0,\\
\sum_{i=1}^Nq_i=1.
\end{aligned}
\right.
\end{equation}
which can be simplified to
\begin{equation}\label{eq:lambda}
\sum_{i=1}^N\frac{1}{2}-\sqrt{\frac{1}{16}-\frac{\lambda}{48p_i}}=1
\end{equation}
and 
\begin{equation}
q_i=\frac{1}{2}-\sqrt{\frac{1}{16}-\frac{\lambda}{48p_i}}
\end{equation}
Eq~\ref{eq:lambda} can be solved efficiently by binary search of $\lambda$ since the left side is a monotone function of $x$. 

\section{Proof of the optimality}
\label{app:proof}
\begin{theorem}[Upper bound]\label{thm:upper}
Given distribution $p=(p_1,p_2,\dots,p_N)$ indicating where the solution might be, the expected success probability of quantum algorithm in $T$ oracle queries is upper bounded by
\begin{align}
\max_{\textbf{r}}&\sum_{i=1}^N r_i\sin^2((2T+1)\arcsin\sqrt{r_i})\\
\text{\rm{s.t. }} &\sum_{i=1}^N r_i\le 1,\\
&\forall i:\ 0\le r_i.
\end{align}
\end{theorem}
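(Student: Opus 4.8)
The plan is to establish a matching query lower bound, i.e.\ an \emph{upper} bound on the achievable expected success probability, by the hybrid argument of Bennett--Bernstein--Brassard--Vazirani sharpened to the exact Grover constant in the manner of Zalka~\cite{zalka1999grover}, and then to read off the optimization by summing against the prior. The design goal is to produce a bound of the same shape as the achievable value in Theorem~\ref{thm:lower}, so that the two meet and optimality of our algorithm follows.

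First I would fix the model. An arbitrary $T$-query algorithm is a fixed, input-independent sequence of unitaries $U_0,\dots,U_T$ with the phase oracle $O_x$ for the unknown solution $x$ inserted between them, producing $\ket{\psi^x_T}=U_T O_x\cdots U_1 O_x U_0\ket{0}$, after which a standard-basis measurement outputs a guess. Writing $a_x$ for the conditional probability that the guess equals $x$, the quantity to control is $\mathrm{ESP}=\sum_{x=1}^N p_x a_x$. Since $U_0,\dots,U_T$ do not depend on $x$, the instances differ only through the oracle, and the reference run (identity oracle) passes through normalized states $\ket{\psi_t}$ whose squared overlaps on item $x$, suitably combined, define a nonnegative weight $r_x$. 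Because these are the squared components of a single normalized state, one gets $\sum_{x=1}^N r_x\le 1$, which is exactly the feasible region in the statement, and the hybrid argument bounds how far $\ket{\psi^x_T}$ can drift from the reference run purely in terms of $r_x$.

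The technical heart is to upgrade the order-only estimate of the hybrid argument to the exact rotation bound $a_x\le\sin^2((2T+1)\arcsin\sqrt{r_x})$. Restricting to the two-dimensional plane spanned by $\ket{x}$ and its orthogonal complement, each oracle call together with the following input-independent unitary can advance the success angle on instance $x$ by at most $\arcsin\sqrt{r_x}$, so after $T$ queries the success amplitude obeys the stated ceiling; this is amplitude amplification read as an upper bound rather than a construction, and it bites precisely in the monotone regime $r_x\le\sin^2\frac{\pi}{2(2T+1)}$ (beyond that one uses only $a_x\le 1$, and over-allocating budget to a single item is never optimal, which is why the extra constraint in Eq.~\eqref{eq:extra} may be imposed without loss). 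The hard part will be exactly this step: extracting the constant $2T+1$ together with the global budget constraint $\sum_i r_i\le 1$ simultaneously, and running a convexity/symmetrization argument showing that no clever spreading of amplitude across the $2T+1$ stages or across distinct items can beat the clean single-instance rotation that saturates the bound.

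Finally I would aggregate against the prior: $\mathrm{ESP}=\sum_x p_x a_x\le\sum_{x=1}^N p_x\sin^2((2T+1)\arcsin\sqrt{r_x})$, and maximize the right-hand side over all feasible amplitude budgets $\textbf r$ with $\sum_{i=1}^N r_i\le 1$ and $r_i\ge 0$; the weights here are the prior probabilities $p_i$, so this is precisely the optimization of Theorem~\ref{thm:upper}. Since dropping the peak constraint $r_i\le\sin^2\frac{\pi}{2(2T+1)}$ does not change the optimal value (over-allocation is never beneficial), the resulting maximum coincides with the achievable value of Theorem~\ref{thm:lower}, and hence our algorithm is optimal.
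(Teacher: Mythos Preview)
Your outline is the same route the paper takes: a BBBV/Zalka hybrid argument, an angular (rotation) bound per oracle call, a symmetrization step, and aggregation against the prior. So at the level of strategy you are on target.

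Two concrete points need tightening before the argument actually yields the constant $2T{+}1$ with the budget $\sum_x r_x\le 1$.
\begin{enumerate}
\item The weights are time-indexed, not a single $r_x$. Writing $\ket{\psi_t}=U_t\cdots U_0\ket{0}$ for the reference run and $u_{t,x}=|\langle x|\psi_t\rangle|^2$, each oracle step contributes $2\arcsin\sqrt{u_{t,x}}$ to the angle between the true run and the reference run (the factor $2$ comes from $\|(I-O_x)\ket{\psi_t}\|=2\sqrt{u_{t,x}}$), and the final measurement contributes an additional $\arcsin\sqrt{u_{T,x}}$ where $u_{T,x}=|M_x\ket{\psi_T}|^2$. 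Your sentence ``each oracle call \dots can advance the success angle by at most $\arcsin\sqrt{r_x}$'' is off by this factor of $2$ and omits the measurement term; with only $\arcsin\sqrt{r_x}$ per call you would get $T\arcsin\sqrt{r_x}$ and hence an upper bound strictly below the achievable value, contradicting Theorem~\ref{thm:lower}.
\item The constraint $\sum_x r_x\le 1$ does not come from ``a single normalized state''; rather one has $T{+}1$ separate constraints $\sum_x u_{t,x}\le 1$, one per time step. The reduction to a single $r_x$ is exactly the symmetrization lemma you allude to: one must show that
\[
\max\ \sum_x p_x\, f^2\!\Bigl(\arcsin\sqrt{u_{T,x}}+2\sum_{t=0}^{T-1}\arcsin\sqrt{u_{t,x}}\Bigr)\quad\text{s.t.}\quad \sum_x u_{t,x}\le 1\ (\forall t)
\]
is attained when $u_{t,x}=r_x$ for all $t$, which collapses the exponent to $(2T{+}1)\arcsin\sqrt{r_x}$. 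The paper proves this as a separate lemma by a pairwise averaging (variance-decreasing) argument; your ``convexity/symmetrization'' hint is the right idea, but it is the place where the work actually lives, not a corollary of normalization of one state.
\end{enumerate}
Once these two points are made precise, your aggregation step and the observation that the cap $r_i\le\sin^2\frac{\pi}{2(2T+1)}$ may be dropped without loss are exactly as in the paper.
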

\begin{proof}
Any quantum algorithm in $T$ queries is formed as a  transformation $U_TO_xU_{T-1}\dots U_1O_xU_0$ applied to $|0\rangle$ and measurement $\{M_i\}$, $i\in\{1,2,\dots,N\}$.

Let $\phi^x_t=U_TO_xU_{T-1}\dots O_xU_{t}U_{t-1}\dots U_0|0\rangle$. 
Let 
\begin{equation}
f(x):
=\left\{
\begin{aligned}
&\sin x,&0\le x\le \pi/2,\\
&1, &\pi/2<x.
\end{aligned}\right.
\end{equation}
For unit vectors $|a\rangle,|b\rangle,|c\rangle$, let $\langle a,c\rangle=2\arcsin(|a-c|)$ denotes the angle between $a$ and $c$, then $\langle a-c|a-c\rangle=4\sin^2(\langle a,c\rangle/2)$. Since $\langle a,c\rangle\le\langle a,b\rangle+\langle b,c\rangle$ by triangle inequality, 
$$\begin{aligned}
&|a-c|\\
=&2\sin(\langle a,c\rangle/2)\\
\le&2\sin(\min(|\langle a,b\rangle/2+\langle b,c\rangle/2,\pi/2))\\
=&2f(|\langle a,b\rangle/2+\langle b,c\rangle/2)
\end{aligned}$$

Quantum states are unit vectors in Hilbert space. Note that $\phi^x_t$ is unit vector for all $t$ and $\phi^x_0=(\phi^x_0-\phi^x_1)+(\phi^x_1-\phi^x_2)+\dots +(\phi^x_{T-1}-\phi^x_T)+\phi^x_T$, $M_x$ is the measurement which outputs $x$. We have
\begin{align*}
&\cos(\langle\phi_0^x,\phi_T^x\rangle)=|\langle\phi_0^x|\phi_T^x\rangle|\\
=&\langle\phi_0^x|M_x M_x|\phi_T^x\rangle+\langle\phi_0^x|(I-M_x)(I-M_x)|\phi_T^x\rangle\\
\le&|M_x|\phi_0^x\rangle| |M_x|\phi_T^x\rangle|+\sqrt{1-|M_x|\phi_0^x\rangle|^2} \sqrt{|M_x|\phi_T^x\rangle|^2}\\
=&\cos(\arcsin(|M_x\phi^x_0|)-\arcsin(|M_x\phi^x_T|))
\end{align*}
Then
\begin{align*}
&\arcsin(|M_x\phi^x_0|)-\arcsin(|M_x\phi^x_T|)\\
\le &\langle\phi^x_0,\phi^x_T\rangle\\
\le &\langle\phi^x_0,\phi^x_1\rangle+\langle\phi^x_2,\phi^x_2\rangle+\dots+\langle\phi^x_{T-1},\phi^x_T\rangle\\
= &2\arcsin(|\phi^x_0-\phi^x_{1}|/2)+\dots+2\arcsin(|\phi^x_{T-1}-\phi^x_{T}|/2).
\end{align*}
Hence the the probability of outputting $x$ with input solution $x$ is 
\begin{align*}
&|M_x \phi_0^x|^2\\
\le&f^2(\arcsin(|M_x\phi_T^x|)+\sum_{t=1}^T 2\arcsin(|\phi^x_{t-1}-\phi^x_{t}|/2)).
\end{align*}

On the other hand
\begin{align*}
&\langle\phi^x_t-\phi^x_{t+1}|\phi^x_t-\phi^x_{t+1}\rangle\\
=&\langle 0|U_0^\dagger\dots U_{t}^\dagger (O_x-I)U_{t+1}^\dagger O_x\dots O_xU_T^\dagger |\\
&U_TO_xU_{T-1}\dots O_xU_{t+1}(O_x-I)U_{t}\dots U_0|0\rangle\\
=&\langle 0|U_0^\dagger\dots U_{t}^\dagger (2|x\rangle\langle x|)(2|x\rangle\langle x|)U_{t}\dots U_0|0\rangle\\
=&4|\langle x|U_{t}\dots U_0|0\rangle|^2.
\end{align*}
Let $u_{t,x}=|\langle x|U_{t}\dots U_0\rangle|^2$, we have
\begin{align*}
&|M_x\phi^x_0|^2\\
\le &f^2(\arcsin(|M_x\phi_T^x|)
+\sum_{t=1}^T 2\arcsin(|\phi^x_{t-1}-\phi^x_{1}|/2))\\
=&f^2( \arcsin\sqrt{q_{T,x}}+2\sum_{t=0}^{T-1}\arcsin\sqrt{u_{0,x}}).
\end{align*}
So the expected success probability is upper bounded by

\begin{align}\label{eq:optimzationq}
\max_{q_{t,x}}&\sum_{x=1}^N p_xf^2( \arcsin\sqrt{u_{T,x}}+2\sum_{t=0}^{T-1}\arcsin\sqrt{u_{0,x}}),\\
\label{eq:le}
\text{s.t. }&\forall t:\sum_{x=1}^N u_{t,x}\le 1,\\
&\forall t,x: u_{t,x}\ge0.
\end{align}

The optimization function Eq~\ref{eq:optimzationq} increases monotonously with $u_{t,x}$, so we can replace the $\le1$ with $=1$ in constraints Eq~\ref{eq:le} without changing the maximum value. By Lemma~\ref{lem:equal}, the function can reach its maximum when $u_{t,x}=u_{0,x}$, which matches Theorem~\ref{thm:upper}.
\end{proof}
\begin{lemma}\label{lem:equal}
The following optimization problem
$$
\begin{aligned}
\max_{\{u_{t,x}\}}&\sum_{x=1}^N p_xf^2(\sum_{t=1}^m\arcsin\sqrt{u_{t,x}}),\\
\text{\rm s.t. }&\forall t:\sum_{x=1}^N u_{t,x}= 1,\\
&\forall t,x: u_{t,x}\ge0.
\end{aligned}
$$
can reach the maximum when $\forall x,t:u_{t,x}=u_{0,x}$.
\end{lemma}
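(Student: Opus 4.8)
The plan is to prove the lemma by an explicit ``equalization'' operation: from an arbitrary feasible family $\{u_{t,x}\}$ I will construct a feasible family that does not depend on $t$ and has the same objective value; since the reverse inequality is trivial and the feasible set is compact with the objective continuous, this shows the maximum is attained at a $t$-independent point. It is convenient to pass to angle variables $\theta_{t,x}:=\arcsin\sqrt{u_{t,x}}\in[0,\pi/2]$, so $u_{t,x}=\sin^2\theta_{t,x}$ and the $x$-th summand of the objective is $p_x f^2(S_x)$ with $S_x:=\sum_{t=1}^m\theta_{t,x}$. The construction has two steps. (i) \emph{Capping}: for each $x$ with $S_x>\pi/2$, rescale $\theta_{t,x}\mapsto\theta_{t,x}\,\frac{\pi/2}{S_x}$ for all $t$; since $f$ is constant beyond $\pi/2$ this keeps $f^2(S_x)=1$, and since every rescaled angle shrinks and stays in $[0,\pi/2]$ every $u_{t,x}$ only shrinks, so all row constraints $\sum_x u_{t,x}\le1$ are preserved; afterwards $S_x\le\pi/2$ for every $x$. (ii) \emph{Averaging in $t$}: replace every $\theta_{t,x}$ by $S_x/m$. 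The result is $t$-independent, and since $m\cdot(S_x/m)=S_x$ the argument of $f^2$ in every column is unchanged, so the objective is literally preserved. (As already noted in the main text, working with the relaxed constraints $\sum_x u_{t,x}\le1$ is harmless; a $t$-independent relaxed optimizer can be topped up to make the sums equal $1$ without decreasing the objective, by monotonicity of $f^2$.) Thus everything reduces to checking that the equalized family $u'_{t,x}=\sin^2(S_x/m)$ is feasible, i.e. $\sum_x\sin^2(S_x/m)\le1$.

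This follows column by column from the inequality
\[
\sin^2\!\Big(\tfrac1m(\theta_1+\cdots+\theta_m)\Big)\ \le\ \tfrac1m\big(\sin^2\theta_1+\cdots+\sin^2\theta_m\big)
\qquad\text{for }\theta_t\ge0,\ \theta_1+\cdots+\theta_m\le\tfrac\pi2,
\]
because then $\sum_x\sin^2(S_x/m)\le\frac1m\sum_t\sum_x\sin^2\theta_{t,x}=\frac1m\sum_t\big(\sum_x u_{t,x}\big)\le1$, using the (capped) values. To prove the displayed inequality, write $\sin^2 a=\tfrac12(1-\cos2a)$ to recast it as $\sum_t\cos2\theta_t\le m\cos\big(\tfrac2m\sum_t\theta_t\big)$, i.e., setting $\phi_t:=2\theta_t\in[0,\pi]$ with $\sum_t\phi_t\le\pi$, as $\sum_t\cos\phi_t\le m\cos\big(\tfrac1m\sum_t\phi_t\big)$. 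For a fixed value of $\Sigma:=\sum_t\phi_t\in[0,\pi]$ this is a smoothing fact: by $\cos\phi_i+\cos\phi_j=2\cos\tfrac{\phi_i+\phi_j}{2}\cos\tfrac{\phi_i-\phi_j}{2}$ together with $0\le\tfrac{\phi_i+\phi_j}{2}\le\tfrac\pi2$ (so the first cosine is $\ge0$), replacing any two coordinates by their average never decreases $\sum_t\cos\phi_t$ while keeping $\sum_t\phi_t=\Sigma$ and $\phi_t\ge0$; iterating drives the configuration to $\phi_t\equiv\Sigma/m$, so the maximum at fixed $\Sigma$ is $m\cos(\Sigma/m)$, which is exactly the claim. (The degenerate case $\phi_i+\phi_j=\pi$ forces $\Sigma=\pi$ and all other $\phi_t=0$ and is dispatched directly against $m\cos(\pi/m)$.)

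The real obstacle — and the reason this is not a one-line Jensen argument — is that neither $u\mapsto\arcsin\sqrt u$ nor $\theta\mapsto\sin^2\theta$ is concave or convex on its whole domain, so neither averaging the $u_{t,x}$ nor averaging the $\theta_{t,x}$ is \emph{a priori} feasibility-preserving. Step (i), which exploits precisely that $f$ saturates at $\pi/2$, is what confines all relevant angles to the regime $\sum_t\theta_{t,x}\le\pi/2$ where the needed one-sided inequality holds; and that inequality rests on a cosine bound over an interval of length up to $\pi$, where $\cos$ is not concave, so the pair-averaging argument (rather than plain Jensen) is essential. Getting those two reductions right is the crux; the angle substitution, the monotonicity of $f^2$, and the final compactness conclusion are routine.
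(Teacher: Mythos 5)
Your proof is correct, and it takes a genuinely different route from the paper's. The paper argues by contradiction with a potential function: among all maximizers it selects one that also minimizes the total variance $\sum_{t,x}\bigl(u_{t,x}-1/N\bigr)^2$, and if some column is not constant in $t$ it transfers a small mass $\delta$ between two time steps for two items simultaneously, so that every row sum stays exactly $1$ throughout; the work then goes into showing the objective does not decrease, split into the case $u_{0,1}+u_{1,1}\ge 1$ (the angle sum already exceeds $\pi/2$, so $f$ is saturated and that column contributes $1$ before and after) and the case $u_{0,1}+u_{1,1}<1$ (where one compares the derivatives of $\arcsin\sqrt{\cdot}$ at the two points). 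Your construction is dual to this: you equalize across $t$ in one shot in the angle domain, which preserves the objective \emph{exactly}, and all the work goes into feasibility, namely the inequality $\sin^2\bigl(\tfrac1m\sum_t\theta_t\bigr)\le\tfrac1m\sum_t\sin^2\theta_t$ under $\sum_t\theta_t\le\pi/2$, which you prove by pair-smoothing the cosine sum; your capping step plays exactly the role of the paper's saturated case. What your version buys is an explicit global construction and a clean standalone trigonometric lemma, whereas the paper's local exchange is stated tersely and leaves the reader to check that the combined four-variable move is objective-non-decreasing and strictly variance-decreasing; what the paper's version buys is that feasibility is never in question, since mass is only redistributed within rows. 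Both arguments correctly identify that the failure of $\sin^2$ (equivalently $\arcsin\sqrt{\cdot}$) to be convex or concave on its whole domain is what blocks a one-line Jensen argument, and both handle it by a smoothing device rather than by convexity.
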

\begin{proof}
Let the function reach its maximum value with the sum of variance
\begin{equation}\label{eq:var}
\sum_{t=1}^m\sum_{x=1}^N\left(u_{t,x}-\sum_{x=1}^N u_{t,x}/n\right)^2
\end{equation} minimized at the same time. If there exists $x,t$ that $u_{t,x}\neq u_{0,x}$, 
assuming $u_{0,1}<u_{1,1}, u_{0,2}>u_{1,2}$ without loss of generality. 

Firstly if $u_{0,1}+u_{1,1}\ge 1$, we can update $u_{0,1}$ and $u_{1,1}$ to $u_{0,1}+\delta$ and $u_{1,1}-\delta$ without decreasing the maximum value when $\delta\le|u_{0,1}-u_{1,1}|/2$, 
since $\arcsin\sqrt{u_{0,1}+\delta}+\arcsin\sqrt{u_{1,1}-\delta}\ge\pi/2$. 

Secondly if $u_{0,1}+u_{1,1}< 1$, we have differentials
\begin{align}
&\mathrm{d}\arcsin\sqrt{u_{0,1}}=1/\sqrt{(1-u_{0,1})u_{0,1}}\mathrm{d } u_{0,1},\\
&\mathrm{d}\arcsin\sqrt{u_{1,1}}=1/\sqrt{(1-u_{1,1})u_{1,1}} \mathrm{d } u_{1,1},
\end{align} and
\begin{equation}
1/\sqrt{(1-u_{0,1})u_{0,1}}-1/\sqrt{(1-u_{1,1})u_{1,1}}>0, 
\end{equation}
when $u_{0,1}+u_{1,1}<1, u_{0,1}<u_{1,1}$.    So when $\delta\le|u_{0,1}-u_{1,1}|/2$
\begin{align*}
\arcsin\sqrt{u_{0,1}+\delta}+\arcsin\sqrt{u_{1,1}-\delta}\\
\ge\arcsin\sqrt{u_{0,1}}+\arcsin\sqrt{u_{1,1}}.
\end{align*}
Similarly, when $\delta\le|u_{0,2}-u_{1,2}|/2$
\begin{align*}
\arcsin\sqrt{u_{0,2}-\delta}+\arcsin\sqrt{u_{1,2}+\delta}\\
\ge\arcsin\sqrt{u_{0,2}}+\arcsin\sqrt{u_{1,2}}.
\end{align*}
So we can choose suitable $\delta$ to update 
$u_{0,1}$ and $u_{1,1}$ or $u_{0,2}$ and $u_{1,2}$ to their average without decreasing the maximum value. 

Then the maximum can be obtained with a smaller summation of variance in Eq.(\ref{eq:var}), which contradicts the assumption!

Hence the maximum value of the optimization function can be obtained when $\forall x,t:u_{t,x}=u_{0,x}$. 
\end{proof}
\section{Optimal parameters $\theta$ for 'half-half' distribution with different deviation $\sigma$}

\begin{table}[!htbp]
	\centering 
	\caption{Parameter $\theta$ for distribution with different deviation $\sigma$}  
	\begin{tabular}{|c|c|c|c|c|}  
		\hline  
		$\sigma$ & $\theta$ && $\sigma$ & $\theta$\\
		\hline
		& & &&\\[-6pt]
		 $\frac{1}{80}$& $1.48725065$ && $\frac{5}{80}$ &$1.12383265$\\[6pt]
		\hline  
		& & &&\\[-6pt]
		 $\frac{2}{80}$& $1.40239865$ && $\frac{6}{80}$ &$1.01471265$\\[6pt]
		\hline  
		& & &&\\[-6pt]
		 $\frac{3}{80}$& $1.31480465$ && $\frac{7}{80}$ &$0.88979265$\\[6pt]
		\hline  
		& & &&\\[-6pt]
		 $\frac{4}{80}$& $1.22272065$ && $\frac{8}{80}$ &$0.73831265$\\[6pt]
		\hline  
	\end{tabular}
\end{table}
\end{appendix}
\end{document}